\def\F {{\mathbb{F}}}
\def\cX{{\mathcal X}}
\def\cL{{\mathcal L}}
\def\ba{{\bf a}}
\def\bb{{\bf b}}
\def\bc{{\bf c}}
\def\bv{{\bf v}}
\def\Ga{{\alpha}}
\def\res{{\rm res}}
\def\beq{\begin{equation}}
\def\eeq{\end{equation}}
\def\Go{{\omega}}
\def\GO{{\Omega}}
\def\res{{\rm res}}
\def\Pin {{P_{\infty}}}
\newtheorem{thm}{Theorem}[section]
\newtheorem{prop}[thm]{Proposition}
\newtheorem{lem}[thm]{Lemma}
\newtheorem{cor}[thm]{Corollary}
\numberwithin{equation}{section} \newtheorem{rem}[thm]{Remark}
\newtheorem{exm}[thm]{Example}
\begin{document}

\title{ Quantum Stabilizer Codes from  Maximal Curves}

\author{Lingfei~Jin
\thanks{L. F. Jin  is with Division of
Mathematical Sciences, School of Physical and Mathematical Sciences,
Nanyang Technological University, Singapore 637371, Republic of
Singapore (email: lfjin@ntu.edu.sg). This work is supported in part by the Singapore A*STAR SERC under Research Grant 1121720011.}}


\maketitle

\begin{abstract} A curve attaining the Hasse-Weil bound is called a maximal curve. Usually classical error-correcting codes obtained from a maximal curve have good parameters. However, the quantum stabilizer  codes obtained from such classical error-correcting codes via Euclidean or Hermitian self-orthogonality do not always possess good parameters. In this paper, the Hermitian self-orthogonality of  algebraic geometry codes obtained from two maximal curves  is investigated. It turns out that the stabilizer quantum codes produced from such Hermitian self-orthogonal classical codes have good parameters.
\end{abstract}

\begin{keywords}
Algebraic geometry codes,  Hermitian self-orthogonal, Quantum codes.
\end{keywords}

\section{Introduction}
A powerful construction of quantum codes is through classical codes with certain self-orthogonality \cite{Ash Kni, Ket Kla}. Among these self-orthogonalities, the Hermitian orthogonality produces $q$-ary quantum codes from $q^2$-ary classical error-correcting codes, therefore Hermitian self-orthogonal classical codes may give rise to good  quantum stabilizer codes. However, it is more challenging to construct Hermitian self-orthogonal classical codes than Euclidean self-orthogonal classical codes.

A good family of Hermitian self-orthogonal classical codes is from algebraic geometry codes \cite{JLLX,JX,Kim}. For instance, in \cite{JLLX}, a family of Hermitian self-orthogonal generalized Reed-Solomon codes is constructed and consequently a family of quantum MDS codes is produced. However, the situation is not always like this. For instance, if we consider the quantum codes produced from the Hermitian self-orthogonal classical codes based on the Hermitian curves, the parameters of these quantum codes are not satisfactory (see \cite{Sar}). To show that an algebraic geometry code is Euclidean or Hermitian self-orthogonal, it is essential to construct a proper differential that satisfies certain condition (See Proposition \ref{2.3}). This is usually challenging, in particular, for  the Hermitian self-orthogonality.

In this paper, we first study two maximal curves and the corresponding classical algebraic geometry codes. A useful result is that we are able to construct a suitable differential to describe their Euclidean dual codes. Then via their Euclidean self-orthogonality, we can show that these codes are Hermitian self-orthogonal for certain parameters. Finally, we apply the stabilizer method \cite{Ash Kni} to obtain  quantum codes which have good parameters or even better parameters compared with those in \cite{Br12, Gr12}.

The paper is organized as follows. In Section 2, we briefly introduce some background on algebraic curves and algebraic geometry codes. Section 3 is devoted to two maximal curves and the corresponding algebraic geometry codes with Hermitian self-orthogonality. In Section 4, we produce good quantum codes from Hermitian self-orthogonal classical codes given in Section 3. Comparisons are given as well to show that quantum codes obtained from our construction are indeed good.

\section{Preliminary}
In this section, we briefly introduce some notations and results on algebraic curves and algebraic geometry codes. The reader may refer to \cite{Fu, St93} for the details.

Let $\cX$ be a smooth, projective, absolutely irreducible curve of genus $g$ defined over $K$, where $K$ is a finite field. We denote by $K(\cX)$ the function field of $\cX$. An element of $K(\cX)$ is called a function. The normalized discrete valuation corresponding to a point $P$ of $\cX$ is written as $\nu_P$. For every nonzero element $f$ of $K(\cX)$, we can define a principal divisor ${\rm div}(f):=\sum_{P}\nu_P(f)P$.

For a divisor $G$, the Riemann-Roch space associated to $G$ is defined by
\[\cL(G)=\{f\in K(\cX)\setminus\{0\}:{\rm div}(f)+G\geq0\}\cup\{0\}.\]
Then $\cL(G)$ is a finite-dimensional vector space over $K$ and we denote its dimension by $\ell(G)$.

Let $\Omega$ denote the differential space of $\cX$. For any nonzero differential $\omega$, we can associate a canonical divisor ${\rm div}(\omega):=\sum_{P}\nu_P(\omega)P$. All canonical divisors are equivalent and have degree $2g-2$. For a divisor $G$, we define
\[\Omega(G)=\{\omega\in \Omega\setminus\{0\}:\;{\rm div}(\omega)\ge G\}\]
and denote the dimension of $\Omega(G)$ by $i(G)$. Then one has
\[i(G)=\ell(H-G),\]
where $H$ is a canonical divisor.

The Riemann-Roch Theorem says that
\[\ell(G)=\deg(G)-g+1+\ell(H-G),\]
where $H$ is any canonical divisor.

Before introducing algebraic geometry codes, let us fix some basic notations.
Let $P_1,\dots,P_n$ be pairwise distinct $K$-rational points of $\cX$ and $D=P_1+\dots+P_n$. Choose a divisor $G$ on $\cX$ such that ${\rm supp}( G)\cap {\rm supp}( D)=\varnothing$. Then $\nu_{P_i}(f)\geq0$ for all $1\leq i\leq n$ and any $f\in \cL(G)$.

Consider the following two maps
\[\Psi:\cL(G)\rightarrow K^n,\quad f\mapsto(f(P_1),\dots,f(P_n))\]
and
\[\Phi:\Omega(G-D)\rightarrow K^n,\quad \omega\mapsto(\res_{P_1}(\omega),\dots,\res_{P_n}(\Go)),\]
where $\res_{P_i}(\Go)$ denotes the residue of $\Go$ at $P_i$ (see \cite[Chapter 2]{St93}).
 The images of $\Psi$ and $\Phi$ are denoted by $C_\cL(D,G)$ and $C_\GO(D,G)$, respectively.   It is clear that both $C_\cL(D,G)$ and $C_\GO(D,G)$ are linear codes over $K$. They are called  algebraic-geometry codes (or AG codes for short). A nice property is that the Euclidean dual $C_\cL(D,G)^{\perp}$ ($\perp$ denotes the Euclidean dual) of $C_\cL(D,G)$ is $C_\GO(D,G)$ (see \cite[Theorem II.2.8]{St93}).

 Furthermore, we have the following results.

\begin{prop}\label{2.1}(\cite[Theorem II.2.2 and Corollary II.2.3]{St93}) $C_\cL(D,G)$ is an $[n,k,d]$-linear code over $K$ with parameters
\[k=\ell(G)-\ell(G-D),\quad  d\geq n-\deg(G).\]
\begin{itemize}
\item[{\rm (a)}] If $G$ satisfies $\deg(G)<n$, then
$$k=\ell(G)\geq \deg(G)-g+1.$$
\item[{\rm (b)}] If additionally $2g-2<deg(G)<n$, then $k=deg(G)-g+1$.
\end{itemize}
\end{prop}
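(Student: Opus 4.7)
The plan is to analyze the evaluation map $\Psi:\cL(G)\to K^n$ directly, combining the Riemann-Roch theorem with the standard observation that a nonzero element of a Riemann-Roch space forces the underlying divisor to have nonnegative degree.

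For the dimension formula, I would first pin down $\ker\Psi$. A function $f\in\cL(G)$ lies in the kernel precisely when $f(P_i)=0$ for every $i$, i.e.\ $\nu_{P_i}(f)\ge 1$ for all $i$. Since ${\rm supp}(G)\cap{\rm supp}(D)=\varnothing$, the conditions ${\rm div}(f)+G\ge 0$ and $\nu_{P_i}(f)\ge 1$ for each $P_i$ combine into the single divisor inequality ${\rm div}(f)+(G-D)\ge 0$, so $\ker\Psi=\cL(G-D)$. The rank-nullity theorem then gives $k=\ell(G)-\ell(G-D)$.

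For the minimum distance, take any nonzero codeword $c=\Psi(f)$ of Hamming weight $w$. Then $f$ vanishes at some $n-w$ of the evaluation points, say $P_{i_1},\ldots,P_{i_{n-w}}$, so $f$ is a nonzero element of $\cL(G-P_{i_1}-\cdots-P_{i_{n-w}})$. Taking degrees in the inequality ${\rm div}(f)+G-\sum P_{i_j}\ge 0$ and using $\deg({\rm div}(f))=0$, we obtain $\deg(G)-(n-w)\ge 0$, hence $d\ge n-\deg(G)$.

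For part (a), the hypothesis $\deg(G)<n$ gives $\deg(G-D)<0$, so $\cL(G-D)=\{0\}$ and $k=\ell(G)$; the lower bound $\ell(G)\ge\deg(G)-g+1$ follows immediately from Riemann-Roch since $\ell(H-G)\ge 0$. For part (b), the additional assumption $\deg(G)>2g-2$ forces $\deg(H-G)<0$ for any canonical divisor $H$, hence $\ell(H-G)=0$, and Riemann-Roch collapses to the equality $k=\deg(G)-g+1$. The only genuinely subtle step is the kernel computation: it is precisely the disjointness hypothesis ${\rm supp}(G)\cap{\rm supp}(D)=\varnothing$ that lets the pointwise vanishing condition $f(P_i)=0$ translate cleanly into a single divisor inequality, so I would make sure that transition is spelled out explicitly before invoking rank-nullity and Riemann-Roch.
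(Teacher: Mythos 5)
Your proof is correct and is essentially the standard argument from the cited source (Stichtenoth, Theorem II.2.2 and Corollary II.2.3); the paper itself states this proposition without proof, relying on that citation. Your kernel computation, the degree argument for the distance bound, and the two applications of Riemann--Roch (using $\ell(H-G)\ge 0$ for (a) and $\ell(H-G)=0$ when $\deg(G)>2g-2$ for (b)) all match the textbook proof, and you are right to flag the disjointness of ${\rm supp}(G)$ and ${\rm supp}(D)$ as the step that makes the kernel identification clean.
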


\begin{prop}\label{2.2}(\cite[Theorem II.2.7]{St93}) $C_\GO(D,G)$ is an $[n,k^{\perp},d^{\perp}]$-linear code over $K$ with parameters
\[k^{\perp}=i(G-D)-i(G),\quad  d^{\perp}\geq \deg(G)-(2g-2).\]
\begin{itemize}
\item[{\rm (a)}] If $G$ satisfies $\deg(G)> 2g-2$, then
$$k^{\perp}=i(G-D)\geq n+g-1-\deg(G).$$
\item[{\rm (b)}] If additionally $2g-2<deg(G)<n$, then $$k^{\perp}=i(G-D)= n+g-1-\deg(G)$$.
\end{itemize}
\end{prop}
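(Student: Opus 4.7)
The plan is to analyze the $K$-linear map $\Phi:\Omega(G-D)\to K^n$ directly, by identifying its kernel to compute $k^\perp$, invoking the fact that a nonzero differential has divisor of degree $2g-2$ to lower-bound $d^\perp$, and finally specializing via Riemann--Roch to the two degree regimes in (a) and (b).

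First I would pin down $\ker\Phi$. Because no $P_i$ lies in $\mathrm{supp}(G)$, any $\omega\in\Omega(G-D)$ satisfies $\nu_{P_i}(\omega)\ge -1$, i.e.\ $\omega$ has at worst a simple pole at each $P_i$. The condition $\res_{P_i}(\omega)=0$ for every $i$ is therefore equivalent to $\nu_{P_i}(\omega)\ge 0$ for every $i$, which combined with the conditions away from $D$ is exactly $\omega\in\Omega(G)$. Hence $\ker\Phi=\Omega(G)$ and
\[
k^\perp=\dim_K\Omega(G-D)-\dim_K\Omega(G)=i(G-D)-i(G).
\]

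Next I would bound $d^\perp$. If $\Phi(\omega)$ is a nonzero codeword of weight $w$, with nonzero residues precisely at $P_{i_1},\dots,P_{i_w}$, then at every other $P_j$ the residue vanishes, so by the same ``simple pole plus zero residue is regular'' observation $\omega$ is a nonzero element of $\Omega(G-E)$ with $E=P_{i_1}+\cdots+P_{i_w}$. Since $\deg(\mathrm{div}(\omega))=2g-2$, this forces $\deg(G)-w\le 2g-2$, i.e.\ $w\ge \deg(G)-(2g-2)$.

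For (a) I would observe that $\deg(G)>2g-2$ forces $\deg(H-G)<0$ for every canonical divisor $H$, so $i(G)=\ell(H-G)=0$ and $k^\perp=i(G-D)=\ell(H-G+D)$; Riemann--Roch applied to $H-G+D$ (of degree $2g-2+n-\deg(G)$) then yields the stated lower bound $n+g-1-\deg(G)$. For (b), the additional hypothesis $\deg(G)<n$ gives $\deg(G-D)<0$, hence $\ell(G-D)=0$; substituting this into Riemann--Roch for $G-D$ converts the inequality of (a) into the required equality. I do not anticipate a genuine obstacle here; the one delicate point is the kernel identification, which rests on the standard local fact that a differential with at most a simple pole at $P$ is regular at $P$ if and only if $\res_P(\omega)=0$.
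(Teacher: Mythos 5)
Your proposal is correct and follows exactly the standard argument for this result (the paper gives no proof of its own, citing \cite[Theorem II.2.7]{St93}, whose proof is precisely this: identify $\ker\Phi=\Omega(G)$ via the local fact that a simple pole with vanishing residue is a regular point, bound the weight by the degree $2g-2$ of a canonical divisor, and specialize with Riemann--Roch). No gaps; the kernel identification and the two degree-regime computations are all handled correctly.
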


To study Euclidean self-orthogonality, we have to investigate the relationship between $C_\cL(D,G)$ and $C_\GO(D,G)$.

\begin{prop}\label{2.3}(\cite[Theorem II.2.10]{St93}) Let $\eta$ be a differential such that $\nu_{P_i}=-1$ and $\res_{P_i}(\eta)=1$ for all $i=1,\dots,n$. Then
\[C_\cL(D,G)^{\perp}=C_\GO(D,G)=C_\cL(D,D-G+{\rm div}(\eta)),\]
where $C_\cL(D,G)^{\perp}$ stands for the Euclidean dual of $C_\cL(D,G)$.
\end{prop}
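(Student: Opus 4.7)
The first equality $C_\cL(D,G)^\perp = C_\Omega(D,G)$ is already quoted in the preceding discussion as \cite[Theorem II.2.8]{St93}, so the plan is to concentrate on the second equality $C_\Omega(D,G) = C_\cL(D, D-G+{\rm div}(\eta))$. The natural strategy is to construct an explicit isomorphism between the Riemann--Roch space $\cL(D-G+{\rm div}(\eta))$ and the differential space $\Omega(G-D)$ that transports the evaluation map $\Psi$ into the residue map $\Phi$. My candidate is the $K$-linear map
\[
\phi:\cL(D-G+{\rm div}(\eta))\longrightarrow\Omega(G-D),\qquad f\longmapsto f\eta,
\]
with inverse $\omega\mapsto\omega/\eta$.

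First I would check that $\phi$ is well defined and bijective by a divisor bookkeeping: since ${\rm div}(f\eta)={\rm div}(f)+{\rm div}(\eta)$, the inequality ${\rm div}(f\eta)\ge G-D$ is equivalent to ${\rm div}(f)+D-G+{\rm div}(\eta)\ge 0$, i.e.\ $f\in\cL(D-G+{\rm div}(\eta))$; the assignment $\omega\mapsto\omega/\eta$ clearly inverts $\phi$ on nonzero elements. This already shows that the two vector spaces are abstractly isomorphic.

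The key step is then to show that $\phi$ intertwines the evaluation of $f$ with the residue of $f\eta$ at each $P_i$. Evaluating the divisor inequality $\text{div}(f)+D-G+\text{div}(\eta)\ge 0$ locally at $P_i$ and using $\text{supp}(G)\cap\text{supp}(D)=\varnothing$ together with $\nu_{P_i}(\eta)=-1$ forces $\nu_{P_i}(f)\ge 0$, so $f$ is regular at $P_i$. Writing $\eta$ locally as $(t_i^{-1}+\text{holomorphic})\,dt_i$ for a local parameter $t_i$ at $P_i$ and expanding $f$ as a power series, one reads off
\[
\res_{P_i}(f\eta)=f(P_i)\cdot\res_{P_i}(\eta)=f(P_i),
\]
which is exactly the identity $\Phi\circ\phi=\Psi$. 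Passing to images gives $C_\Omega(D,G)=C_\cL(D,D-G+{\rm div}(\eta))$.

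I expect the divisor bookkeeping in the first step to be essentially routine; the real content lies in the residue identity $\res_{P_i}(f\eta)=f(P_i)$, which hinges on combining the two special properties of $\eta$ (a simple pole and residue $1$ at each $P_i$) with the regularity of $f$ at $P_i$ that is forced by the Riemann--Roch condition. That local computation is the step I would want to pin down most carefully, since the rest of the argument is formal linear algebra applied to the $\phi$ so constructed.
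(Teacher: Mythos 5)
The paper gives no proof of this proposition---it is quoted verbatim from \cite[Theorem II.2.10]{St93}---and your argument is precisely the standard proof from that reference: the multiplication map $f\mapsto f\eta$ is an isomorphism $\cL(D-G+{\rm div}(\eta))\to\Omega(G-D)$ by the divisor bookkeeping you describe, and the local computation $\res_{P_i}(f\eta)=f(P_i)\res_{P_i}(\eta)=f(P_i)$ (valid because $\nu_{P_i}(f)\ge 0$ and $\eta$ has a simple pole with residue $1$ at $P_i$) shows the residue code equals the evaluation code. Your proof is correct and complete; nothing further is needed.
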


To obtain good classical AG codes,
 one is interested in the number of $K$-rational points on an algebraic curve.  We denote by $N_K(\cX)$ the  number of $K$-rational points on an algebraic curve $\cX$  over $K$. A celebrated result on the number of $K$-rational points is the Hasse-Weil bound stating that
\[N_K(\cX)\leq |K|+1+2g\sqrt{|K|}.\]

If the number of rational points of a curve $\cX$ achieves the upper bound, i.e., $N_K(\cX)= |K|+1+2g\sqrt{|K|}$,  then $\cX$ is called a maximal curve. A well-known maximal curve is the Hermitian curve over $\F_{q^2}$ defined by the equation $y^q+y=x^{q+1}$, where $\F_{q^2}$ denotes the finite field of $q^2$ elements. Lots of maximal curves can be produced by coverings of the Hermitian curve \cite{GSX}. In the next section, we consider a maximal curve which is also a covering of the Hermitian curve.

\section{AG codes from maximal curves}
Throughout the rest of this paper, we consider the finite field $K=\F_{q^2}$, where $q$ is a power of $2$.
\subsection{AG codes from the first maximal curve}
Let $F=\F_{q^2}(\cX)$ be the function field of $\cX$ over $\F_{q^2}$, where $\cX$ is defined by the  following equation
\[y^2+y=x^{q+1}.\]
The genus $g$ of $\cX$ is $g=q/2$ and the number of rational points is $2q^2+1$. The set of these $2q^2+1$  rational points consists of a point at infinity $P_\infty$ and the other  $2q^2$ ``finite" rational points.

Let $n=2q^2$ and let $\{P_1,\dots, P_n\}$ be all $n$ ``finite" rational points. Put
$D=P_1+\dots+P_n$.

\begin{lem}\label{3.1} For a positive integer $m$,
the Euclidean dual $C_\cL(D,m\Pin)^{\perp}$ of $C_\cL(D,m\Pin)$ is $C_\cL(D,(n+2g-2-m)\Pin)$ .
\end{lem}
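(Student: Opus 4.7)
The plan is to invoke Proposition \ref{2.3}: I need to exhibit a single differential $\eta$ on $\cX$ satisfying $\nu_{P_i}(\eta) = -1$ and $\res_{P_i}(\eta) = 1$ for every $i = 1,\dots,n$, and then compute ${\rm div}(\eta)$ explicitly. Once this is done, the identity $D - m\Pin + {\rm div}(\eta) = (n+2g-2-m)\Pin$ becomes a short arithmetic check.

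My candidate is $\eta = dx/(x^{q^2}-x)$. To verify the conditions at each $P_i$, note that the defining equation $y^2 + y = x^{q+1}$ is Artin--Schreier and the only pole of $x^{q+1}$ is at $\Pin$, so $F/\F_{q^2}(x)$ is unramified at every finite place; hence $x-a$ is a uniformizer at each $P_i$ with $x(P_i)=a\in\F_{q^2}$. Since $x^{q^2}-x = \prod_{b\in\F_{q^2}}(x-b)$ is a product of distinct linear factors, it has a simple zero at $P_i$, giving $\nu_{P_i}(\eta) = -1$. The residue at $P_i$ is $1/\prod_{b\neq a}(a-b)$, i.e.\ the reciprocal of the formal derivative of $x^{q^2}-x$ evaluated at $a$; in characteristic $2$ this derivative equals $-1 = 1$, so $\res_{P_i}(\eta) = 1$, as required.

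The next step is to compute ${\rm div}(\eta) = {\rm div}(dx) - {\rm div}(x^{q^2}-x)$. Since $\nu_{\Pin}(x) = -2$, the pole divisor of $x^{q^2}-x$ is $2q^2\Pin$; combined with the $2q^2$ simple zeros at the $P_i$ identified above, a degree count forces ${\rm div}(x^{q^2}-x) = D - 2q^2\Pin$. For ${\rm div}(dx)$, I start from ${\rm div}(dx) = -2\,P_\infty^{(x)}$ in $\F_{q^2}(x)$; the conorm to $F$ contributes $-4\Pin$ (ramification index $2$), and the different of $F/\F_{q^2}(x)$ is supported only at $\Pin$ because the extension is unramified elsewhere. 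A degree comparison against $2g - 2 = q - 2$ pins the different exponent at $\Pin$ to be $q+2$, and hence ${\rm div}(dx) = (q-2)\Pin$. Putting these together yields ${\rm div}(\eta) = (n + 2g - 2)\Pin - D$, and plugging into Proposition \ref{2.3} gives $D - m\Pin + {\rm div}(\eta) = (n + 2g - 2 - m)\Pin$, which is the claimed duality.

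The main obstacle I anticipate is the bookkeeping at $\Pin$ in characteristic $2$: naive symbolic manipulation is dangerous because $d(x^{q^2}) = 0$ wipes out the ``obvious'' leading term of $dx/(x^{q^2}-x)$ at infinity. The cleanest route is therefore to avoid computing $\nu_{\Pin}(dx)$ by hand with local uniformizers and instead pin it down globally via the Riemann--Hurwitz formula applied to the degree-$2$ Artin--Schreier extension $F/\F_{q^2}(x)$, using the already-stated value $g = q/2$ to extract the different exponent and hence the valuation of $dx$ at $\Pin$.
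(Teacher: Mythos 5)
Your proof is correct and follows essentially the same route as the paper: exhibit a differential $\eta = dx/h(x)$ with a simple pole and residue $1$ at every $P_i$, compute ${\rm div}(\eta)=-D+(n+2g-2)\Pin$, and apply Proposition \ref{2.3}. The only discrepancy is that the paper writes the denominator as $x-x^q$, which appears to be a typo for $x^{q^2}-x$ (the denominator must vanish at all $2q^2$ finite points, which lie over every $a\in\F_{q^2}$, not just over $\F_q$); your choice is the correct one, and you additionally supply the verifications of $\nu_{P_i}(\eta)=-1$, $\res_{P_i}(\eta)=1$, and $\nu_{\Pin}(dx)=q-2$ via the different that the paper leaves as ``one can verify.''
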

\begin{proof}
Consider the differential $\eta=\frac{dx}{x-x^q}$. Then one can verify that ${\rm div}(\eta)=-D+(n+2g-2)\Pin$ and $\res_{P_i}(\eta)=1$ for all $i=1,\dots,n$. Thus, by Proposition \ref{2.3}, we have
\begin{eqnarray*}
C_\cL(D,m\Pin)^{\perp}&=&C_\GO(D,m\Pin)\\
&=&C_\cL(D,D-m\Pin+{\rm div}(\eta))\\
&=&C_\cL(D,(n+2g-2-m)\Pin).\end{eqnarray*}
This completes the proof.
\end{proof}

\begin{rem}
From  Lemma \ref{3.1}, the dual of the AG code $C_\cL(D,m\Pin)$ can be represented as another AG code by choosing suitable differential. Therefore, self-orthogonality of the AG code can be described in the term of the degree of divisor $G$, i.e., $m$ in our case.  However, this is not always the case for other curves. Actually it is a challenging task to find the proper differential needed.
\end{rem}

For simplicity, let us denote by $C_m$ the AG code $C_\cL(D,m\Pin)$. Then, the above result says that $C_m^{\perp}=C_{n+2g-2-m}$. Hence, Lemma \ref{3.1} gives the following result.
\begin{cor}\label{3.2} $C_m$ is Euclidean self-orthogonal if $m\le n/2+g-1$.
\end{cor}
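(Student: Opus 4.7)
The plan is to read Corollary 3.2 as a direct consequence of Lemma 3.1 combined with the elementary monotonicity of Riemann-Roch spaces in their divisor argument. By Lemma 3.1 we already know that $C_m^\perp = C_{n+2g-2-m}$, so Euclidean self-orthogonality of $C_m$ is equivalent to the inclusion $C_m \subseteq C_{n+2g-2-m}$.

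To establish that inclusion, I would use the general fact that if $G_1 \le G_2$ are two divisors on $\cX$ with the same support disjoint from $\mathrm{supp}(D)$, then $\cL(G_1) \subseteq \cL(G_2)$, and therefore the evaluation images satisfy $C_\cL(D,G_1) \subseteq C_\cL(D,G_2)$. Specialising to $G_1 = m\Pin$ and $G_2 = (n+2g-2-m)\Pin$, the inclusion holds precisely when $m \le n+2g-2-m$, i.e.\ when $2m \le n+2g-2$, which is the stated bound $m \le n/2 + g - 1$.

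So the concrete steps are: (i) invoke Lemma 3.1 to rewrite $C_m^\perp$ as $C_{n+2g-2-m}$; (ii) verify the divisor inequality $m\Pin \le (n+2g-2-m)\Pin$ under the hypothesis $m \le n/2 + g - 1$; (iii) conclude $\cL(m\Pin) \subseteq \cL((n+2g-2-m)\Pin)$ and hence $C_m \subseteq C_m^\perp$.

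There is essentially no obstacle: the whole content of the corollary sits in Lemma 3.1, which is where the nontrivial work (constructing the differential $\eta = dx/(x-x^q)$ with the right divisor and residues) was already carried out. The only thing worth double-checking is that nothing degenerate happens at the boundary, e.g.\ that we do not need $\deg(m\Pin) < n$ for the inclusion itself (we do not — monotonicity of $\cL(\cdot)$ is unconditional), and that the self-orthogonality condition $\dim C_m \le n/2$ is automatically forced by $C_m \subseteq C_m^\perp$ rather than needing a separate argument.
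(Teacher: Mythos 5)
Your proposal is correct and follows exactly the route the paper intends: the paper states Corollary 3.2 as an immediate consequence of Lemma 3.1, and the implicit argument is precisely your chain $C_m^{\perp}=C_{n+2g-2-m}$ together with $\cL(m\Pin)\subseteq\cL((n+2g-2-m)\Pin)$ whenever $m\le n+2g-2-m$. Your extra remarks on the boundary cases are sound but not needed.
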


Recall that the Hermitian inner product for two vectors $\ba=(a_1,\dots,a_n), \bb=(b_1,\dots,b_n)$ in
$\F_{q^2}^n$ is defined by
$\langle\ba,\bb\rangle_H:=\sum_{i=1}^na_ib_i^q$. For a linear code $C$ over $\F_{q^2}$,
the {\it Hermitian dual} of $C$ is defined by
\[C^{\perp_H}:=\{\bv\in\F_q^n:\;\langle\bv,\bc\rangle_H=0\ \forall\ \bc\in C\}.\]
Then $C$ is Hermitian self-orthogonal if $C\subseteq C^{\perp_H}$. by the definition of Hermitian self-orthogonality, one can easily obtain a useful fact, namely $C\subseteq C^{\perp_H}$ if and only if $C^q\subseteq C^{\perp}$.

\begin{thm}\label{3.3}
$C_m$ is Hermitian self-orthogonal if  $m\leq 2q-2$.
\end{thm}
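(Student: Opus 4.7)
The plan is to exploit the elementary equivalence stated just before the theorem: $C_m \subseteq C_m^{\perp_H}$ if and only if $C_m^q \subseteq C_m^\perp$, where $C_m^q$ denotes the coordinatewise $q$-th power of $C_m$. So the task reduces to a containment of two Euclidean codes, and Lemma \ref{3.1} already describes $C_m^\perp$ in the convenient form $C_{n+2g-2-m} = C_{2q^2+q-2-m}$.

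First I would identify $C_m^q$ with an AG code of the same form. For any $f \in \cL(m\Pin)$, Frobenius acts coordinatewise as
\[\Psi(f)^q = (f(P_1)^q, \ldots, f(P_n)^q) = (f^q(P_1), \ldots, f^q(P_n)) = \Psi(f^q),\]
and $\nu_{\Pin}(f^q) = q\,\nu_{\Pin}(f) \geq -qm$ while $f^q$ is regular outside $\Pin$. Hence $f^q \in \cL(qm\Pin)$, giving the inclusion $C_m^q \subseteq C_{qm}$.

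Next I would compare divisors to chain the inclusion $C_{qm} \subseteq C_{2q^2+q-2-m}$. Since evaluation at the same points $P_1,\dots,P_n$ is used throughout, it suffices to check $qm \leq 2q^2+q-2-m$, i.e.\ $(q+1)m \leq 2q^2+q-2$. For the borderline value $m=2q-2$ one has $(q+1)(2q-2) = 2q^2-2 \leq 2q^2+q-2$, so the inequality holds for every integer $m \leq 2q-2$. Stringing the inclusions together yields
\[C_m^q \;\subseteq\; C_{qm} \;\subseteq\; C_{2q^2+q-2-m} \;=\; C_m^{\perp},\]
which by the reformulation above is exactly $C_m \subseteq C_m^{\perp_H}$.

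I expect no serious obstacle: the only subtle point is verifying that $C_m^q$ sits inside an AG code supported at $\Pin$, which is immediate once one observes that Frobenius commutes with polynomial operations and preserves pole orders at the rational point $\Pin$. The rest is an arithmetic inequality on $m$ together with a direct application of Lemma \ref{3.1}. Note that the bound $m \leq 2q-2$ is much tighter than the Euclidean self-orthogonality bound $m \leq q^2 + q/2 - 1$ from Corollary \ref{3.2}, reflecting the extra factor of $q$ picked up when passing from Euclidean to Hermitian orthogonality.
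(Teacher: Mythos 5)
Your proposal is correct and follows essentially the same route as the paper: reduce Hermitian self-orthogonality to $C_m^q\subseteq C_m^{\perp}$, use $C_m^q\subseteq C_{qm}$ together with Lemma \ref{3.1}, and check the arithmetic inequality $qm\le n+2g-2-m$ for $m\le 2q-2$. The only difference is that you spell out the Frobenius/pole-order justification of $C_m^q\subseteq C_{qm}$, which the paper states without proof.
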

\begin{proof} If $m\le 2q-2$, then we have $mq\le n+2g-2-m$. Thus, one has $C_{mq}\subseteq C_{n+2g-2-m}$. Hence, the desired result follows from the fact that
\[C_m^{\perp}=C_{n+2g-2-m}\quad {\rm
and}\quad C_m^q\subseteq C_{mq}.\]

\end{proof}

\subsection{AG codes from the second maximal curve}
By abuse of notations, we still use the same notations as in the previous section for our second maximal curve and corresponding AG codes.

Let $q$ be an odd power of $2$. Thus, $3$ divides $q+1$. Let $F=\F_{q^2}(\cX)$ be the function field of $\cX$ over $\F_{q^2}$, where $\cX$ is defined by the  following equation
\[y^q+y=x^{3}.\]
The genus $g$ of $\cX$ is $g=q-1$ and the number of rational points is $3q^2-2q+1$. The set of these $3q^2-2q+1$  rational points consists of a point at infinity $P_\infty$ and the other  $3q^2-2q$ ``finite" rational points.

Let $n=3q^2-2q$ and let $\{P_1,\dots, P_n\}$ be all $n$ ``finite" rational points. Put
$D=P_1+\dots+P_n$.

\begin{lem}\label{3.4} For a positive integer $m$,
the Euclidean dual $C_\cL(D,m\Pin)^{\perp}$ of $C_\cL(D,m\Pin)$ is $C_\cL(D,(n+2g-2-m)\Pin)$ .
\end{lem}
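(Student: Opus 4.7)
The plan is to parallel the proof of Lemma \ref{3.1}: exhibit a rational differential $\eta$ on $\cX$ whose divisor equals $-D + (n+2g-2)\Pin$ and whose residue at every $P_i$ equals $1$, then invoke Proposition \ref{2.3}.

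The natural candidate is the logarithmic differential $\eta = dx/f(x)$ for a polynomial $f \in \F_{q^2}[x]$ that vanishes simply at exactly the $x$-coordinates of the $n$ finite rational points of $\cX$. To pin down $f$, I would first observe that a pair $(a,b) \in \F_{q^2}^2$ lies on $\cX$ iff $a^3 = b^q + b$; since $y\mapsto y^q+y$ on $\F_{q^2}$ is $\F_q$-linear, surjects onto $\F_q$, and has kernel $\F_q$, such a $b$ exists iff $a^3 \in \F_q$, and in that case there are exactly $q$ choices. The condition $a^{3q}=a^3$ is equivalent to $a(a^{3(q-1)}-1)=0$, so the valid $x$-coordinates are precisely the roots of $f(x):=x^{3q-2}-x$. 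Because $3\mid q+1$, one checks $\gcd(3(q-1),q^2-1)=3(q-1)$, so $f$ splits into $3q-2$ distinct roots inside $\F_{q^2}$, accounting for all $n=q(3q-2)$ finite rational points.

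Next I would compute ${\rm div}(\eta)={\rm div}(dx)-{\rm div}(f(x))$. Balancing poles in $y^q+y=x^3$ gives $\nu_{\Pin}(x)=-q$ and $\nu_{\Pin}(y)=-3$, and since $x\colon\cX\to\P^1$ is unramified away from $\Pin$, for every root $a$ of $f$ one has ${\rm div}(x-a)=\sum_{i:\,x(P_i)=a}P_i-q\Pin$; summing over the $3q-2$ roots gives ${\rm div}(f(x))=D-n\Pin$. For ${\rm div}(dx)$, Riemann--Hurwitz for the degree-$q$ cover $x$ (totally ramified only at $\Pin$, since $\gcd(3,2)=1$) forces different exponent $d_{\Pin}=(2g-2)+2q=4(q-1)$, whence ${\rm div}(dx)=-2q\Pin+4(q-1)\Pin=(2g-2)\Pin$. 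Combining, ${\rm div}(\eta)=-D+(n+2g-2)\Pin$.

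For residues, at each $P_i$ the function $x-x(P_i)$ is a local parameter (unramified cover), so $\res_{P_i}(\eta)=1/f'(x(P_i))$; since $f'(x)=(3q-2)x^{3q-3}-1$ and $3q-2$ is even (as $q$ is a power of $2$), the leading term vanishes in characteristic $2$, giving $f'\equiv -1\equiv 1$, hence every residue equals $1$. Proposition \ref{2.3} then yields $C_\cL(D,m\Pin)^{\perp}=C_\cL(D,D-m\Pin+{\rm div}(\eta))=C_\cL(D,(n+2g-2-m)\Pin)$. The main subtlety relative to Lemma \ref{3.1} is discovering the right polynomial: here $x$ is a degree-$q$ cover and only the special $3q-2$ values $a\in\F_{q^2}$ with $a^3\in\F_q$ lift to rational points, so the naive $x^{q^2}-x$ introduces spurious non-rational zeros; one must recognize that $x^{3q-2}-x$ is precisely the separable polynomial whose root set coincides with the $x$-coordinates of $D$.
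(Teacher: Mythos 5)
Your proof is correct and takes essentially the same route as the paper: the paper also uses the differential $\eta = dx/h(x)$ with $h(x)=x-x^{3q-2}$ (identical to your $f(x)=x^{3q-2}-x$ in characteristic $2$) and concludes via Proposition \ref{2.3}. You simply supply the divisor and residue computations that the paper leaves as ``one can verify,'' and identify the roots of $h$ via the trace condition $a^3\in\F_q$ rather than via the splitting behaviour of $x-\Ga^i$; these are the same underlying fact.
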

\begin{proof}
Let $\Ga$ be a $(q^2-1)$th primitive root of unity in  $\F_{q^2}$ and define the polynomial
{\small \[h(x):=x\prod_{j=0}^{3(q-1)-1}\left(\Ga^{j(q+1)/3}-x\right)=x\left(1-x^{3(q-1)}\right)=x-x^{3q-2}.\]}
It is easy to see that $x-\Ga^i$ splits completely in $F$ if and only if $i$ is divisible by $(q+1)/3$. Furthermore, $x$ splits completely in $F$. This implies that the principal divisor ${\rm div}\left(h(x)\right)$
is $D-(3q^2-2q)\Pin$.

Consider the differential $\eta=\frac{dx}{h(x)}$. Then one can verify that ${\rm div}(\eta)=-D+(n+2g-2)\Pin$ and $\res_{P_i}(\eta)=1$ for all $i=1,\dots,n$. Thus, by Proposition \ref{2.3}, we have
\begin{eqnarray*}
C_\cL(D,m\Pin)^{\perp}&=&C_\GO(D,m\Pin)\\
&=&C_\cL(D,D-m\Pin+{\rm div}(\eta))\\
&=&C_\cL(D,(n+2g-2-m)\Pin).\end{eqnarray*}
This completes the proof.
\end{proof}

For simplicity, let us denote by $C_m$ the AG code $C_\cL(D,m\Pin)$. Then, the above result says that $C_m^{\perp}=C_{n+2g-2-m}$. Hence, Lemma \ref{3.4} gives the following results.
\begin{cor}\label{3.5} $C_m$ is Euclidean self-orthogonal if $m\le n/2+g-1$.
\end{cor}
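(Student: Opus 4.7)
The plan is to deduce the corollary directly from Lemma \ref{3.4}, which identifies the Euclidean dual of $C_m$ as another one-point AG code with the same divisor support, namely $C_m^{\perp} = C_{n+2g-2-m}$. Thanks to this clean formula, Euclidean self-orthogonality $C_m \subseteq C_m^{\perp}$ is reduced to an inclusion between two codes of the form $C_\cL(D, \cdot \, \Pin)$, which in turn follows from an inclusion of Riemann-Roch spaces.

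First I would recall that whenever $m \le m'$, we have the inclusion $\cL(m\Pin) \subseteq \cL(m'\Pin)$, because any function $f$ with $\mathrm{div}(f) + m\Pin \ge 0$ automatically satisfies $\mathrm{div}(f) + m'\Pin \ge 0$. Applying the evaluation map $\Psi$ from Section 2 termwise then yields $C_m = C_\cL(D, m\Pin) \subseteq C_\cL(D, m'\Pin) = C_{m'}$, since the evaluation of $f$ at the points $P_1, \dots, P_n$ is the same regardless of which ambient Riemann-Roch space we consider $f$ to lie in (and the supports of $m\Pin$ and $m'\Pin$ remain disjoint from $\mathrm{supp}(D)$).

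Next, setting $m' = n + 2g - 2 - m$, the condition $C_m \subseteq C_{m'}$ becomes $m \le n + 2g - 2 - m$, i.e., $2m \le n + 2g - 2$, equivalently $m \le n/2 + g - 1$. By Lemma \ref{3.4}, the right-hand code $C_{m'}$ is precisely $C_m^{\perp}$, so the inclusion reads $C_m \subseteq C_m^{\perp}$, which is the definition of Euclidean self-orthogonality.

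There is essentially no obstacle: the entire content of the corollary is packaged in Lemma \ref{3.4}, and the remaining step is the elementary monotonicity of AG codes with respect to the degree of a one-point divisor. The only minor point worth being careful about is ensuring that the inclusion of Riemann-Roch spaces carries over to an inclusion of codes, but this is immediate since $\Psi$ is a single well-defined linear map on any $\cL(kP_\infty)$ with $k < n$ (and for larger $k$ one notes that $C_k$ is simply the image of $\Psi$ on a larger space, so the inclusion still holds).
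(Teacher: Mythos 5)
Your proposal is correct and follows exactly the route the paper intends: the paper states Corollary \ref{3.5} as an immediate consequence of Lemma \ref{3.4}, i.e., $C_m^{\perp}=C_{n+2g-2-m}$ together with the monotonicity $C_m\subseteq C_{m'}$ for $m\le m'$, which is precisely what you spell out. Your additional care about the evaluation map respecting the inclusion of Riemann--Roch spaces is a correct (and slightly more explicit) filling-in of a step the paper leaves implicit.
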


\begin{thm}\label{3.6}
$C_m$ is Hermitian self-orthogonal if  $m\leq 3q-4$.
\end{thm}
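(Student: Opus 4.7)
The plan is to mirror the argument used in Theorem \ref{3.3}, but with the parameters of the second curve substituted in. The key structural facts have already been established: by Lemma \ref{3.4} the Euclidean dual satisfies $C_m^{\perp}=C_{n+2g-2-m}$, and the general fact recorded just before Theorem \ref{3.3} says that $C\subseteq C^{\perp_H}$ if and only if $C^q\subseteq C^{\perp}$. So Hermitian self-orthogonality of $C_m$ reduces to the inclusion
\[
C_m^{q}\subseteq C_{n+2g-2-m}.
\]

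First I would observe that raising every coordinate of a codeword of $C_m=C_{\cL}(D,m\Pin)$ to the $q$-th power corresponds to applying the Frobenius $f\mapsto f^q$ to the evaluated functions. For $f\in\cL(m\Pin)$ one has $\nu_{\Pin}(f^q)=q\cdot\nu_{\Pin}(f)\ge -mq$ and no new poles are introduced outside $\Pin$, so $f^q\in\cL(mq\Pin)$. Evaluating at the rational points $P_1,\dots,P_n$ then gives $C_m^{q}\subseteq C_{mq}$. This reduces the problem to checking that $C_{mq}\subseteq C_{n+2g-2-m}$, which is nothing but the numerical inequality $mq\le n+2g-2-m$, i.e.
\[
m(q+1)\le n+2g-2.
\]

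Next I would substitute the parameters of the curve $y^q+y=x^3$: here $n=3q^2-2q$ and $g=q-1$, so
\[
n+2g-2 = 3q^2-2q+2(q-1)-2 = 3q^2-4.
\]
The inequality to verify becomes $m(q+1)\le 3q^2-4$. A short division shows $(3q^2-4)/(q+1)=3q-3-\tfrac{1}{q+1}$, so the largest integer $m$ satisfying this bound is $m=3q-4$. For $m=3q-4$ one checks directly that $(3q-4)(q+1)=3q^2-q-4\le 3q^2-4$, confirming the bound is tight.

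I do not expect any serious obstacle here: the hard analytic content is already contained in Lemma \ref{3.4} (constructing the differential $\eta=dx/h(x)$ with simple poles and residue $1$ at each $P_i$). Once the dual description $C_m^{\perp}=C_{n+2g-2-m}$ is in hand, Theorem \ref{3.6} is purely a pole-order bookkeeping exercise. The only thing to be slightly careful with is verifying that the numerical constant $3q-4$ is exactly the right threshold, which the computation above confirms.
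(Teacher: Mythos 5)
Your proof is correct and follows essentially the same route as the paper: reduce Hermitian self-orthogonality to $C_{mq}\subseteq C_{n+2g-2-m}$ via Lemma \ref{3.4} and the equivalence $C\subseteq C^{\perp_H}\Leftrightarrow C^q\subseteq C^{\perp}$, then verify the numerical inequality $m(q+1)\le n+2g-2=3q^2-4$. In fact your write-up is more careful than the paper's own proof, which retains the hypothesis ``$m\le 2q-2$'' copied verbatim from Theorem \ref{3.3} instead of the correct ``$m\le 3q-4$''; your computation showing that $3q-4$ is exactly the largest integer satisfying $m(q+1)\le 3q^2-4$ supplies the detail the paper omits.
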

\begin{proof} If $m\le 2q-2$, then we have $mq\le n+2g-2-m$. Thus, one has $C_{mq}\subseteq C_{n+2g-2-m}$. Hence, the desired result follows from the fact that
\[C_m^{\perp}=C_{n+2g-2-m}\quad {\rm
and}\quad C_m^q\subseteq C_{mq}.\]
This completes the proof.
\end{proof}

\section{Quantum stabilizer codes}
In this section, we apply the Hermitian self-orthogonality of the classical AG codes $C_m$ constructed in the previous section to produce quantum stabilizer codes and then analyze their parameters.

Let us first recall a result on quantum codes obtained from  Hermitian self-orthogonal classical codes.
\begin{lem}\label{4.1}(see \cite{Ash Kni}) There is a $q$-ary $[[n,n-2k, d^{\perp}]]$-quantum stabilizer code whenever there exists a
$q$-ary classical Hermitian self-orthogonal  $[n,k]$-linear code with dual distance $d^{\perp}$.
\end{lem}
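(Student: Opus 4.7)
The plan is to reduce the statement to the symplectic (trace-symplectic) stabilizer formalism over $\Fq$, where the dimension and distance parameters can be read off directly. The bridge is the standard correspondence between Hermitian orthogonality on $\F_{q^2}^n$ and a nondegenerate symplectic form on $\Fq^{2n}$.

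First, I would fix an $\Fq$-basis $\{1,\theta\}$ of $\F_{q^2}$ and use it to identify $\F_{q^2}^n$ with $\Fq^{2n}$ via the $\Fq$-linear isomorphism $\phi$ sending $\bu+\theta\bv$ to $(\bu,\bv)$. The computation that makes the whole approach work is
\[\mathrm{Tr}_{\F_{q^2}/\Fq}\!\bigl(\langle \ba,\bb\rangle_H\bigr)=\langle \phi(\ba),\phi(\bb)\rangle_s,\]
for a nondegenerate alternating $\Fq$-bilinear form $\langle\cdot,\cdot\rangle_s$ on $\Fq^{2n}$. From this, $C\subseteq C^{\perp_H}$ is equivalent to $\phi(C)$ being an isotropic subspace of $\Fq^{2n}$ (with respect to $\langle\cdot,\cdot\rangle_s$) of $\Fq$-dimension $2k$.

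Next, I would invoke the standard stabilizer construction (Calderbank--Rains--Shor--Sloane; Ashikhmin--Knill): any isotropic $\Fq$-subspace $S\subseteq\Fq^{2n}$ of dimension $2k$ lifts, modulo phases, to an abelian subgroup of the Pauli group on $n$ qudits of dimension $q$, whose simultaneous $+1$ eigenspace has $\cC$-dimension $q^{n-2k}$. This is the desired quantum stabilizer code; it has length $n$ and encodes $n-2k$ qudits. For the minimum distance, the quantum distance equals the minimum symplectic weight in $S^{\perp_s}\setminus S$ (up to purity). Because $\phi$ preserves weight in the sense that the symplectic weight of $\phi(\bc)$ equals the Hamming weight of $\bc\in\F_{q^2}^n$, and $S^{\perp_s}=\phi(C^{\perp_H})$, the quantum distance is at least $\min\{\mathrm{wt}(\bc):\bc\in C^{\perp_H}\setminus C\}\ge d^{\perp}$.

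The main obstacle is the verification of the trace-symplectic identity above: the precise form of $\langle\cdot,\cdot\rangle_s$ depends on the choice of basis $\{1,\theta\}$ and on the action of Frobenius $x\mapsto x^q$, and one must check both nondegeneracy and the weight-preservation of $\phi$. Once these routine but delicate computations are in place, the statement follows immediately from the stabilizer formalism. A secondary subtlety is the purity convention: writing the parameters as $[[n,n-2k,d^{\perp}]]$ records only the ``pure'' lower bound, so if the resulting code is impure the true distance may exceed $d^{\perp}$, but the claim here is exactly this lower bound.
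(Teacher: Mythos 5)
The paper does not prove Lemma \ref{4.1} at all --- it is quoted from Ashikhmin--Knill --- so there is no internal proof to compare against; your write-up is essentially the standard argument behind that citation (the trace-symplectic reduction of Calderbank--Rains--Shor--Sloane as generalized by Ashikhmin--Knill and Ketkar et al.), and it is correct in outline. One technical point needs repair: the form $\mathrm{Tr}_{\F_{q^2}/\Fq}\bigl(\langle\ba,\bb\rangle_H\bigr)$ is in general \emph{symmetric} rather than alternating --- indeed $\mathrm{Tr}_{\F_{q^2}/\Fq}\bigl(\langle\ba,\ba\rangle_H\bigr)=2\sum_i a_i^{q+1}$, which vanishes identically only in characteristic $2$. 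The form actually used in the literature is the trace-alternating form $\mathrm{Tr}\bigl((\ba\cdot\bb^q-\ba^q\cdot\bb)/(\beta^{2q}-\beta^2)\bigr)$ attached to a normal basis $\{\beta,\beta^q\}$ of $\F_{q^2}/\Fq$; since $\ba\cdot\bb^q=\langle\ba,\bb\rangle_H$ and $\ba^q\cdot\bb=\langle\ba,\bb\rangle_H^{\,q}$, Hermitian orthogonality kills both terms, so isotropy of $\phi(C)$ still follows, and $\F_{q^2}$-linearity of $C$ gives $S^{\perp_s}=\phi(C^{\perp_H})$; the rest of your argument (code dimension $q^{n-2k}$, symplectic weight equal to Hamming weight, distance at least $d^{\perp}$, with the purity caveat) then goes through unchanged. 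In the present paper $q$ is a power of $2$, so your naive form happens to be alternating there, but the lemma is stated for general $q$, and the fix above is what makes the proof valid in odd characteristic.
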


Using the connection of quantum codes with classical Hermitian self-orthogonal codes in Lemma \ref{4.1}, we can derive our main result stated as below. Then we use some numerical results to show that the quantum codes produced from our results are indeed good.
\begin{exm}
\begin{thm}\label{4.2}
If $q$ is a power of $2$, then there exists a $q$-ary $[[2q^2,k_Q:=2q^2-2m+q-2,  d_Q\ge m+2-q]]_q$ quantum code  for any positive integer $m$ satisfying $q-1\le m \leq 2q-2$.
\end{thm}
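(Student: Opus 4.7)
The plan is to combine the Hermitian self-orthogonality from Theorem \ref{3.3} with the standard AG-code parameter formulas in Propositions \ref{2.1}--\ref{2.2}, and then feed the result into the stabilizer construction of Lemma \ref{4.1}. Since the hypothesis $m \le 2q-2$ is exactly what Theorem \ref{3.3} requires, $C_m = C_\cL(D, m\Pin)$ is automatically Hermitian self-orthogonal; what remains is to read off its dimension $k$ and its dual distance $d^\perp$, and then plug them into the $[[n, n-2k, d^\perp]]_q$ template.

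For the dimension, I would observe that with $n = 2q^2$ and $g = q/2$, the lower bound $m \ge q-1$ in the hypothesis is equivalent to $m > 2g-2 = q-2$, while the upper bound $m \le 2q-2$ trivially gives $m < n$. Hence Proposition \ref{2.1}(b) applies and delivers $k = \deg(m\Pin) - g + 1 = m - q/2 + 1$. For the dual distance, Lemma \ref{3.1} identifies $C_m^\perp$ with $C_\GO(D, m\Pin)$, so Proposition \ref{2.2} immediately yields $d^\perp \ge \deg(m\Pin) - (2g-2) = m + 2 - q$.

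Substituting $k = m - q/2 + 1$ into Lemma \ref{4.1} then produces a $q$-ary quantum stabilizer code of length $2q^2$, dimension $2q^2 - 2(m - q/2 + 1) = 2q^2 - 2m + q - 2$, and minimum distance at least $m + 2 - q$, matching the claim verbatim.

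I do not expect a real obstacle here: the serious work was already done upstream in Lemma \ref{3.1} (constructing the differential $\eta = dx/(x - x^q)$) and in Theorem \ref{3.3} (leveraging it to upgrade Euclidean to Hermitian self-orthogonality). The interval $q-1 \le m \le 2q-2$ is precisely the one on which the exact dimension formula and the Hermitian self-orthogonality are simultaneously available, so the proof reduces to assembling the three cited results and checking the arithmetic.
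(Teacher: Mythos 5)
Your proposal is correct and follows exactly the route the paper intends: the paper's own ``proof'' is the single remark that Theorem \ref{4.2} follows directly from Theorem \ref{3.3} and Lemma \ref{4.1}, and your write-up simply supplies the routine parameter computations (dimension via Proposition \ref{2.1}(b) using $2g-2=q-2<m<n$, dual distance via Lemma \ref{3.1} and Proposition \ref{2.2}) that the paper leaves implicit. The arithmetic checks out, so this is the same argument, just written out in full.
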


\begin{thm}\label{4.3}
If $q$ is an odd power of $2$, then there exists a $q$-ary $[[3q^2-2q^2,k_Q:=3q^2-2m-4,  d_Q\ge m+4-2q]]_q$ quantum code  for any positive integer $m$ satisfying $2q-3\le m \leq 3q-4$.
\end{thm}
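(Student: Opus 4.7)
The plan is to assemble the quantum code directly from the Hermitian self-orthogonality result of Theorem~\ref{3.6} together with the dimension/distance formulas for the AG code $C_m$, then invoke the stabilizer construction in Lemma~\ref{4.1}. Concretely, I would set $G=mP_\infty$ on the second maximal curve (so $n=3q^2-2q$, $g=q-1$) and proceed in three short steps: establish Hermitian self-orthogonality of $C_m$, read off its dimension, and estimate the dual distance.

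First, since the hypothesis is $m\le 3q-4$, Theorem~\ref{3.6} immediately gives $C_m\subseteq C_m^{\perp_H}$, so Lemma~\ref{4.1} is applicable once we know the parameters $[n,k]$ and the dual distance $d^\perp$ of $C_m$. Second, to determine $k=\dim C_m$ I would apply Proposition~\ref{2.1}(b); this requires $2g-2<\deg G<n$, i.e.\ $2q-4<m<3q^2-2q$. The lower bound $m\ge 2q-3$ in the hypothesis is precisely what makes $\deg G>2g-2$, and the upper bound $m\le 3q-4$ is far below $n$, so Proposition~\ref{2.1}(b) yields
\[
k=\deg(G)-g+1 = m-(q-1)+1 = m-q+2.
\]
Plugging this into the stabilizer formula gives $n-2k = 3q^2-2q-2(m-q+2) = 3q^2-2m-4$, matching $k_Q$.

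Third, for the dual distance I would appeal to Proposition~\ref{2.2}: since $\deg G = m > 2g-2 = 2q-4$, the bound
\[
d^\perp \ge \deg(G)-(2g-2) = m-(2q-4) = m+4-2q
\]
holds, which is exactly the claimed $d_Q$. Assembling these ingredients, Lemma~\ref{4.1} produces a $q$-ary $[[n,n-2k,d^\perp]]_q = [[3q^2-2q,\, 3q^2-2m-4,\, \ge m+4-2q]]_q$ quantum stabilizer code (correcting the evident typo $3q^2-2q^2$ in the statement).

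There is no real obstacle here; the proof is essentially bookkeeping. The only point that deserves attention is checking that the hypothesized range $2q-3\le m\le 3q-4$ simultaneously meets the two technical requirements used above: $m>2g-2$ (so that Proposition~\ref{2.1}(b) and Proposition~\ref{2.2}(b) give tight formulas rather than mere inequalities) and $m\le 3q-4$ (so that Theorem~\ref{3.6} supplies Hermitian self-orthogonality). Both are built into the hypothesis, so the argument closes cleanly.
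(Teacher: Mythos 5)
Your proposal is correct and follows exactly the route the paper intends: the paper's own proof is the one-line remark that Theorems \ref{4.2} and \ref{4.3} ``directly follow'' from Theorems \ref{3.3}, \ref{3.6} and Lemma \ref{4.1}, and your write-up simply supplies the bookkeeping (Proposition \ref{2.1}(b) for $k=m-q+2$, Proposition \ref{2.2} for $d^{\perp}\ge m+4-2q$, and the range check $2g-2=2q-4<m\le 3q-4<n$) that the paper leaves implicit. You also correctly flag the typo $3q^2-2q^2$ for $3q^2-2q$ in the statement.
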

The proof of Theorems \ref{4.2} and \ref{4.3} directly follows from Theorems \ref{3.3}, \ref{3.6} and Lemma \ref{4.1}.

For $q=2$ and $1\leq m\leq 2$, by Theorem \ref{4.2} we can obtain binary quantum codes with parameters $[[8,4,2]]_2$ and $[[8,2,3]]_2$ which are optimal from the online table \cite{Gr12}.
\end{exm}

\begin{exm}
For $q=4$ and $3\leq m\leq 6$, Theorem \ref{4.2} produces $4$-ary $[[32,34-2m,m-2]]_4$ quantum codes.  Namely, $[[32,28,1]]_4$, $[[32,26,2]]_4$, $[[32,24,3]]_4$, $[[32,22,4]]_4$ quantum codes  can be derived. These codes have good parameters. For instance, in the online table \cite{Br12}, a $[[36,22,4]]_4$ quantum code is given. This implies that our quantum code has a smaller length  for the same dimension and distance.
\end{exm}

\begin{exm}
Let $q=8$ and $7\leq m\leq 14$. Then by Theorem \ref{4.2}, we can derive  $8$-ary $[[126, 134-2m, m-6]]_8$ quantum codes. For instance, new quantum codes with parameters $[[128,108,6]]_8$, $[[128,106,7]]_8$, $[[128,104,8]]_8$ can be produced. They have reasonably better parameters compared with the  quantum codes with parameters $[[134,108,6]]_8$, $[[134,106,7]]_8$, $[[134,96,8]]_8$ given in \cite{Br12}.
\end{exm}

\begin{exm}
Let $q=8$ and $13\leq m\leq 20$. Then we can derive  $8$-ary $[[176, 188-2m, m-12]]_8$ quantum codes. For instance, new quantum codes with parameters $[[176,154,5]]_8$, $[[176,152,6]]_8$, $[[176,150,7]]_8$, $[[176,148,8]]_8$ can be produced. They have reasonably better parameters compared with the  quantum codes with parameters $[[185,149,5]]_8$, $[[185,125,7]]_8$, $[[185,113,8]]_8$ given in \cite{Br12}.
\end{exm}

The above examples show that we can derive quantum codes form Theorem \ref{4.2} which are optimal or even have better parameters compared with \cite{Br12, Gr12}. However, for large $q$, it is difficult to find explicit known codes to compare with ours since there are no suitable tables for reference. Nevertheless, we can still illustrate our result by comparing it with some bounds for large $q$. We only discuss the quantum codes given in Theorem \ref{4.2}.

\begin{rem}\label{4.3} Let us analyze the parameters of the quantum codes given in Theorem \ref{4.2}.
\begin{itemize}
\item[(i)] From the quantum Singleton bound and Theorem \ref{4.2}, the quantum codes given in Theorem \ref{4.2} satisfy
\[n+2-q\le k_Q+2d_Q\le n+2,\]
where $n$ is the length $2q^2$. So the difference of our quantum codes from the Singleton bound is $q$.
\item[(ii)] Let us  consider the quantum Hamming bound \cite{Ket Kla}
\[q^{n-k_Q}\ge \sum_{j=0}^{\lfloor(d_Q-1)/2\rfloor}{n\choose j}(q^2-1)^j.\]
For instance, we just consider the case where $m=2q-3$. Then, $d_Q=q-1$. Thus,  if take logarithm of the right-hand side of the above Hamming bound, we get  the following limit
\[\frac 1q\log_q\left(\sum_{j=0}^{(q-2)/2}{n\choose j}(q^2-1)^j\right)\rightarrow \frac 32\]
as $q$ tends to $\infty$, i.e., the right-hand side of the above Hamming bound is $q^{3q/2+o(q)}$.
The left-hand side of the above Hamming bound is $q^{2q-2}$. If we take logarithm of both the sides with  base $q$, then one can see the difference is about $q/2+o(q)$.  This difference is smaller than the one compared with the Singleton bound.
\end{itemize}
\end{rem}

Lingfei JIN received her Ph.D degree in mathematics from Nanyang Technological University, Singapore in 2013. She is currently a research fellow in Nanyang Technological University, Singapore. Her research interests include classical and quantum coding.


\begin{thebibliography}{99}
\bibitem{Ash Kni} A.~Ashikhmin and E.~Knill, ``Nonbinary quantum stablizer
codes," \emph{IEEE Trans. Inf. Theory,} vol. 47, no. 7, pp.
3065--3072, Nov. 2001.

\bibitem{Br12} J. Bierbrauer , ``Some good quantum twisted codes," http://www.mathi.\ uni-heidelberg.de/\~yves/Matritzen/QTBCH/QTBCHIndex.html, April, 2013.

\bibitem{Fu} W. Fulton and W. A. Benjamin, {\it Algebraic Curves: An Introduction to Algebraic
Geometry},  New York, 1969; reprint, Addison-Wesley, Redwood
City, CA, 1989.
\bibitem{JLLX}
L. F. Jin, S. Ling, J. Q. Luo and C. P. Xing, ``Application of Classical Hermitian Self-Othogonal MDS Codes to Quantum MDS Codes," \emph{IEEE. Trans. Inform. Theory,} vol. 56, no. 9, pp. 4735-4740, Sep. 2010.

\bibitem{JX}
L. F.~Jin and C. P.~ Xing, ``Euclidean and Hermitian Self-Orthogonal Algebraic Geometry Codes and Their Application to Quantum Codes," \emph{IEEE. Trans. Inform. Theory,} vol. 58, no. 9,  pp.5484-5489, 2011.


\bibitem{Kim} J.~Kim and G.~Matthews, ``Quantum error correcting codes from algebraic curves,"
\emph{World Scientific Review Volume,} 2008.

\bibitem{Ket Kla} A.~Ketkar, A.~Klappenecker, S.~Kumar and P.~Sarvepalli, ``Nonbinary stablizer codes over finite fields," \emph{IEEE.
Trans. Inform. Theory,} vol. 52, no. 11,  pp.4892--4914, Nov.
2006.


\bibitem{GSX} A. Garcia, H. Stichtenoth and C. P. Xing, ``On subfields of the Hermitian function field," {\it Compositio Math.} {\bf 120}, 137--170 2000.

\bibitem{Gr12} M. Grassl, ``Code Tables: Bounds on the parameters of various
types of codes," http://www.codetables.de/, April, 2013.

\bibitem{Rain} E. M.~Rains, ``Nonbinary quantum codes," \emph{IEEE. Trans. Inform. Theory,} vol. 45, no. 6,
pp.1827--1832, Sept. 1999.

\bibitem{Sar}P. K. Sarvepali and A. Klappenecker, ``Nonbinary quantum codes from Hermitian curves
,"
{\it Springer-Verlag Berlin Heidelberg.}, AAECC 2006, LNCS 3857, pp. 136-143
, 2006.
\bibitem{St93}
H.~Stichtenoth, {\it Algebraic Function Fields and Codes,}
Springer Verlag, 1993.

\end{thebibliography}
\end{document}